\algnewcommand{\Inputs}[1]{%
	\State \textbf{Inputs:}
	\Statex \hspace*{\algorithmicindent}\parbox[t]{.8\linewidth}{\raggedright #1}
}
\algnewcommand{\Initialize}[1]{%
	\State \textbf{Initialize:}
	\Statex \hspace*{\algorithmicindent}\parbox[t]{.8\linewidth}{\raggedright #1}
}
\newcommand{\ma}[1]{\ensuremath{\mathsf{#1}}}
\renewcommand{\vec}[1]{\ensuremath{\mathbf{#1}}}
\renewcommand{\O}{\ensuremath{\mathcal{O}}}
\newcommand{\R}{\ensuremath{\mathbb{R}}}
\newcommand{\E}{\ensuremath{\mathbb{E}}}
\newtheorem{proposition}{Proposition}
\newcommand{\ie}{\textit{i.e.}}
\newcommand{\eg}{\textit{e.g.}}
\newcommand{\etal}{\textit{et al.}}
\DeclareMathOperator*{\argmin}{\mathrm{argmin}}
\DeclareMathOperator*{\Var}{Var}
\DeclareMathOperator*{\Cov}{Cov}
\DeclareMathOperator*{\tr}{tr}
\def\x{{\mathbf x}}
\title{VARIANCE REDUCTION IN STOCHASTIC METHODS FOR LARGE-SCALE REGULARISED LEAST-SQUARES PROBLEMS}
\name{Yusuf Yi\u{g}it Pilavc\i, Pierre-Olivier Amblard, Simon Barthelm\'e, Nicolas Tremblay\thanks{This work was partly funded by
		the French National Research Agency in the framework of the "Investissements d’avenir” program (ANR-15-IDEX-02), the LabEx PERSYVAL (ANR-11-LABX-0025-01), the ANR GraVa (ANR-18-CE40-0005),
		the ANR GRANOLA  (ANR-21-CE48-0009),
		the MIAI@Grenoble Alpes chairs ``LargeDATA at UGA" and ``Pollutants'' (ANR-19-P3IA-0003).}}
\address{	CNRS, Univ. Grenoble Alpes, Grenoble INP, GIPSA-lab, Grenoble, France}
\begin{document}
%\ninept
%
\maketitle
\begin{abstract}
  Large dimensional least-squares and regularised least-squares problems are
  expensive to solve. There exist many approximate techniques, some
  deterministic (like conjugate gradient), some stochastic (like stochastic
  gradient descent). Among the latter, a new class of techniques uses
  Determinantal Point Processes (DPPs) to produce unbiased estimators of the
  solution. In particular, they can be used to perform Tikhonov regularization
  on graphs using random spanning forests, a specific DPP.
  While the unbiasedness of these algorithms is attractive, their variance can
  be high. We show here that variance can be reduced by combining the stochastic
  estimator with a deterministic gradient-descent step, while keeping the
  property of unbiasedness. We apply this technique to Tikhonov regularization
  on graphs, where the reduction in variance is found to be substantial at very
  small extra cost.

% We improve the Monte Carlo estimators for solving graph signal smoothing problem~\cite{pilavci2020smoothing} with a small additional cost.
% This improvement yields a parametric estimator which uses the gradient descent update as a  control variate.
% Our further analysis indicates that the parameter of the improved estimator can be set in several ways very cheaply.
% Finally, we illustrate it on denoising and node classification tasks.

\end{abstract}
\begin{keywords}
	graph signal processing, smoothing, variance reduction, random spanning forests.
\end{keywords}
\section{Introduction}
\label{sec:intro}

In linear least-squares problems, given measurements $\vec{y} \in \R^n$ and
predictors $\ma{A} \in \R^{n \times p}$, we seek the vector $\hat{\vec{x}}$ verifying:
\begin{equation}
  \label{eq:least-squares}
  \hat{\vec{x}}=\argmin_{\vec{x}\in\mathbb{R}^p} ||\ma{A}\vec{x}-\vec{y}||_2^2 + \lambda
  \vec{x}^t\ma{P}\vec{x}
\end{equation}
where $\lambda \vec{x}^t\ma{P}\vec{x}$ is a regularization term.
$\hat{\vec{x}}$ can be computed exactly at cost $\O(np^2)$, but for very large $n$
 or $p$ this is costly, and approximate methods may be used instead. There are
 dozens of such methods, either deterministic or stochastic in nature. The best
 known example of the former are the various gradient descent methods \cite{nesterov2003introductory},
 while among the latter the most popular is certainly stochastic gradient
 descent \cite{bottou-mlss-2004}. An attractive alternative to stochastic gradient descent is to
 use methods based on Determinantal Point Processes, which, roughly, consist in
 finding a  well-chosen, random subset of the rows of $\ma{A}$ so that solving
 the least-squares problem for \emph{just these rows} gives an \emph{unbiased}
 estimator for the solution to the full problem \cite{derezinski2021determinantal,fanuel2021diversity}. This contrasts with
 stochastic gradient descent, where the estimator is generally biased after a
 finite number of steps.

 One drawback of this new class of methods is that sampling a subset of rows
 with the right properties may be expensive, so that reducing the variance by simply increasing the number of estimates becomes rapidly too costly. %it is not worthwhile to  reduce variance by just averaging the estimates.
 We show below that a very
 simple way to reduce variance is to combine a stochastic estimator with (at
 least one) step of gradient descent. The result is another unbiased estimator,
 one that is guaranteed to have lower variance. We apply our technique to
 Tikhonov regularization on graphs, where DPP-based estimators are particularly
 well-adapted \cite{pilavci2020smoothing,pilavci2021graph}. As in standard applications of gradient descent, most of
 the difficulty consists in determining how large the gradient descent step
 should be. In graphs this problem turns out to be quite tractable, with good
 heuristics available. Numerical results confirm that the reduction in variance
 obtained is substantial, at small computational cost.\\

\noindent \textbf{Main idea} The main idea of the paper is extremely simple. Solving a least-squares problem of the form of Eq.\eqref{eq:least-squares} is equivalent to
minimizing a quadratic form
\begin{equation}
  \label{eq:quad-form}
  f(\x) = \frac{1}{2} \x^t \ma{Q} \x - \vec{r}^t \x.
\end{equation}
Minimising $f$ by gradient descent consists in taking steps of the form
\[ \x_t = \x_{t-1} - \alpha \nabla f(\x_{t-1})  = \x_{t-1} - \alpha
  (\ma{Q}\x_{t-1} - \vec{r}),\] where $\alpha$ is the step size.
Since the gradient is zero at the solution, $\hat{\x} = \argmin
f(\x)$ is a fixed point of the iteration. Now suppose that $\bar{\x}$ is an
unbiased estimator of $\hat{\x}$, \ie, a random variable such that $\E
(\bar{\x}) = \hat{\x} $. Then, it is easy to check that
\[ \E \left(\bar{\x}  - \alpha (\ma{Q}\bar{\x} - \vec{r}) \right) = \hat{\x}, \]
meaning that $\bar{\x}$ stays unbiased after one step of gradient descent (or
indeed several). We show below that setting the step size $\alpha$ correctly guarantees a
reduction in variance, and give practical solutions for finding an appropriate
step size in a graph signal processing setting.

% In extending signal processing tools for graphs~\cite{shuman2013emerging},
% smoothing via graph Tikhonov regularization is a natural way of smoothing graph signals.
% It has been used for formulating many problems in graph signal processing such as denoising~\cite{shuman2013emerging} and semi-supervised learning for node classification~\cite{avrachenkov2012generalized}.
% Moreover, its solution is useful as a building block in other graph-related problems~\cite{pilavci2021graph}.

% \noindent \textbf{Contributions.} Our main contributions in this paper are:
% \begin{itemize}
% 	\item We propose a deterministic modification on the previously proposed Monte Carlo estimators which yields a new unbiased estimator for $\hat{\vec{x}}$.
% 	\item We prove that the new estimator outperforms the previous one under certain settings of its intrinsic parameter.
% 	\item On the top of cost of the previous estimator, the computation of the new estimator requires the matrix-vector product with  $\ma{L}$ which can be simplified in several ways.
% \end{itemize}
% \smallskip

% \noindent \textbf{Organization of the paper.} In Section~\ref{sec:background}, we start by introducing necessary notation and recalling some definitions from our previous work. Then, in Section~\ref{sec:alpha_star}, we introduce the improved estimator and delve into its theoretical properties while detailing on the selection of its intrinsic parameter. Finally, we illustrate this new estimator in Section~\ref{sec:exp}.

\section{Background}
\label{sec:background}

\noindent \textbf{Tikhonov regularization in graphs.}
The regularised least-squares estimator we are interested in is graph Tikhonov
regularisation (GTR), a method for denoising
graph signals. A ``graph signal'' is a vector of measurements associated with
the nodes of a graph, for instance brain activity in $n$ brain regions, where
the graph models neural connectivity across regions. GTR also occurs as a
subproblem in other methods, like semi-supervised learning
\cite{pilavci2021graph}, which is why finding an efficient approximation
algorithm is of high interest.

Let us first set notation. We denote a graph by the set $\mathcal{G} = (\mathcal{V},\mathcal{E},w)$ with  $|\mathcal{V}| = n$ vertices and $|\mathcal{E}| = m$ edges. $w:\mathcal{V}\times\mathcal{V}\mapsto\mathbb{R}^{+}$ is called the weight function and is non-zero only for $(i,j)\in\mathcal{E}$. In this work, we only consider connected (\ie, there exists a path between any pair of nodes) and undirected graphs (\ie, verifying $\forall(i,j)\in\mathcal{E}, w(i,j) =w(j,i)$).
In addition, we use certain matrices to depict the algebraic properties of graphs.
These are the (weighted) adjacency matrix $\ma{W} = [w(i,j)]_{i,j}\in\mathbb{R}^{n\times n}$, the diagonal degree matrix whose diagonal entries are $\ma{D}_{i,i} = \sum_{j=1}^n w(i,j)$ and the graph Laplacian, $\ma{L} = \ma{D}-\ma{W}$. It is well-known that $\ma{L}\in\mathbb{R}^{n\times n}$ is a symmetric positive semi-definite matrix with $n$ eigenvalues $\lambda_1=0<\lambda_2\leq\hdots\leq\lambda_n$ for any undirected, connected  graph~\cite{van2010graph}. %For $a,b\subseteq\mathcal{V}$, $\ma{L}_{a|b}$ in the rest denotes the graph Laplacian reduced to the rows indexed by $a$ and the columns indexed by $b$.
\smallskip

Given a noisy version
$\vec{y}\in\mathbb{R}^n$ of an underlying graph signal $\vec{x}\in\mathbb{R}^n$ that one wishes to recover, GTR consists in solving the following minimisation problem to estimate $\vec{x}$:
\begin{equation}
	\hat{\vec{x}}= \argmin_{\vec{z}\in\mathbb{R}^n} q||\vec{z}-\vec{y}||_2^2 + \vec{z}^\top\ma{L}\vec{z},
	\label{eq:tikhonov}
\end{equation}
where
$q\in\mathbb{R}^+$ is a parameter that adjusts the balance between the data
fidelity term $||\vec{z}-\vec{y}||_2^2$ and the regularization term
$\vec{z}^\top\ma{L}\vec{z}$, which forces the solution to be smooth over the
graph. Note that this is a special case of eq. \eqref{eq:least-squares} with
$\ma{A} = \ma{I}$, $\ma{P} = \ma{L}$ and $\lambda=1/q$.
The explicit solution to this problem reads:
\[
	\hat{\vec{x}} = \ma{K}\vec{y} ~\text{ with }~ \ma{K} = q(q\ma{I} +\ma{L} )^{-1}
\]
The direct computation of $\hat{\vec{x}}$ requires (a worst-case) $\mathcal{O}(n^3)$ elementary operations due to the inversion of $(q\ma{I} +\ma{L})$. As $n$ increases, this computation becomes prohibitive. In the next section, we list state-of-the-art methods which avoid this expensive computation.\\

\noindent \textbf{DPP-based estimators for Graph Tikhonov Regularisation.}
In the special case of GTR, the most popular methods are deterministic.
For large $n$, state-of-the-art methods can be divided roughly into two groups, namely iterative methods (\eg~conjugate gradient~\cite{saad2003iterative}) and polynomial approximations (\eg~Chebyshev polynomials~\cite{shuman2011chebyshev}).
Both classes of methods run in linear time with the number of edges, $m$.

As an alternative stochastic method for GTR, we have proposed in previous works unbiased Monte
Carlo estimators for approximating $\hat{\vec{x}}$ which also scale
linearly with $m$ ~\cite{pilavci2020smoothing,pilavci2021graph}. Our method can
be viewed as a DPP-based subsampling of the rows of a matrix, but the case of
GTR is particularly favourable. The appropriate DPP is the random forest
process \cite{avena2018two}, which is easy to sample from, and its analytic properties enable
variance reduction via conditional Monte Carlo. The resulting estimator,
called $\bar{\x}$ in ~\cite{pilavci2020smoothing,pilavci2021graph}, is unbiased ($\mathbb{E}(\bar{\vec{x}})=\hat{\vec{x}}$), can be obtained in $\mathcal{O}(m)$ elementary operations, and takes a very
simple form: based on the partition formed by a random spanning forest, it averages the signal in each part of the partition. We
repeat this process $N$ times and average to get a denoised signal. The details
of the estimator can be found in ~\cite{pilavci2021graph}, but are irrelevant
for what we discuss here.

% This approximation converges towards to the estimated quantity with a rate $\mathcal{O}(N^{-\frac{1}{2}})$.
% In our case, this is a slow convergence rate w.r.t. other deterministic methods, thus our algorithm often fall behind the other deterministic methods in a given amount of run time (See Section 5 in~\cite{pilavci2021graph}).
% This bottleneck motivates us to look for possible modifications yielding an improved estimator.

% In general,
% Monte Carlo algorithms generate $N$ number of samples of a random quantity and average them out to get an approximation.
%\smallskip

\section{Proposed Method}
\label{sec:alpha_star}
In this section, we propose an improved estimator for estimating
$\ma{K}\vec{y}$. As described in the introduction, we combine our unbiased
estimator with a gradient descent step to reduce variance. This variance
reduction technique can also be viewed as an instance of the control variate
method~\cite{botev2017variance} in the Monte Carlo literature.

\subsection{The gradient descent update}
Note that the solution to \eqref{eq:tikhonov} also minimizes the cost function:
\[
F({\vec{z}}) = \frac{1}{2}\vec{z}^\top\ma{K}^{-1}\vec{z} - \vec{z}^\top\vec{y}.
\]
As stated in the introduction, applying the gradient step to $\bar{\vec{x}}$ yields a new estimator:
\begin{equation}
\label{eq:defzbar}
	\bar{\vec{z}} \coloneqq \bar{\vec{x}} - \alpha ( \ma{K}^{-1}\bar{\vec{x}} - \vec{y}).
\end{equation}
\begin{proposition}
	$\bar{\vec{z}}$ is an unbiased estimator for $\hat{\vec{x}}$. Moreover, assuming\footnote{Looking closely at the definition of $\bar{x}$ in~\cite{pilavci2021graph} as well as its variance property, one can show that $\tr\left(\Var(\ma{K}^{-1}\bar{\vec{x}})\right)=0$ only arises when $\vec{y}$ is a constant vector, in which case $\bar{\vec{x}}$ is always equal to $\vec{y}$ whatever the sampled forest, such that by Eq.~\eqref{eq:defzbar}, $\bar{\vec{z}}=\bar{\vec{x}}$ and there is no point in using $\bar{\vec{z}}$.} $\tr\left(\Var(\ma{K}^{-1}\bar{\vec{x}})\right)>0$, the MSE of $\bar{\vec{z}}$ is a quadratic function of $\alpha$ which is minimized for:
		\begin{equation}
		\label{eq:alphastar_covovervar}
			\alpha^\star = \frac{\tr\left(\Cov(\ma{K}^{-1}\bar{\vec{x}},\bar{\vec{x}})\right)}{\tr\left(\Var(\ma{K}^{-1}\bar{\vec{x}})\right)}.
			% \frac{q\tr(\ma{L}\ma{\Sigma})}{				\tr(\ma{L}^2\ma{\Sigma})+2q\tr(\ma{L}\ma{\Sigma}) - q\vec{y}^\top\ma{L}\ma{K}\vec{y}}
		\end{equation}
		%where $\ma{\Sigma} = \mathbb{E}\left[ \bar{\vec{x}}
      %\bar{\vec{x}}^\top\right]  $ is the matrix of second moments of $\bar{\vec{x}}$,
    \end{proposition}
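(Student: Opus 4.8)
The plan is to establish the two claims separately. For unbiasedness, I would simply take expectations in the definition \eqref{eq:defzbar}: since $\bar{\vec{z}} = \bar{\vec{x}} - \alpha(\ma{K}^{-1}\bar{\vec{x}} - \vec{y})$ is an affine (deterministic) transformation of the random vector $\bar{\vec{x}}$, linearity of expectation gives $\E(\bar{\vec{z}}) = \E(\bar{\vec{x}}) - \alpha(\ma{K}^{-1}\E(\bar{\vec{x}}) - \vec{y})$. Using $\E(\bar{\vec{x}}) = \hat{\vec{x}}$ and the fact that $\hat{\vec{x}}$ is the minimiser of $F$, i.e.\ $\ma{K}^{-1}\hat{\vec{x}} - \vec{y} = \nabla F(\hat{\vec{x}}) = \vec 0$, the correction term vanishes and $\E(\bar{\vec{z}}) = \hat{\vec{x}}$ for every $\alpha$.

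For the MSE part, since $\bar{\vec{z}}$ is unbiased the MSE equals $\tr(\Var(\bar{\vec{z}}))$. I would write $\bar{\vec{z}} - \hat{\vec{x}} = (\bar{\vec{x}} - \hat{\vec{x}}) - \alpha\,\ma{K}^{-1}(\bar{\vec{x}} - \hat{\vec{x}})$, again using $\ma{K}^{-1}\hat{\vec{x}} = \vec{y}$ to rewrite the gradient term as $\ma{K}^{-1}\bar{\vec{x}} - \vec{y} = \ma{K}^{-1}(\bar{\vec{x}} - \hat{\vec{x}})$. Expanding the squared norm and taking expectations,
\begin{equation}
\label{eq:msequad}
\mathrm{MSE}(\alpha) = \tr\left(\Var(\bar{\vec{x}})\right) - 2\alpha\,\tr\left(\Cov(\ma{K}^{-1}\bar{\vec{x}},\bar{\vec{x}})\right) + \alpha^2\,\tr\left(\Var(\ma{K}^{-1}\bar{\vec{x}})\right),
\end{equation}
where I would use the identity $\E[(\ma{K}^{-1}(\bar{\vec{x}}-\hat{\vec{x}}))^\top(\bar{\vec{x}}-\hat{\vec{x}})] = \tr(\Cov(\ma{K}^{-1}\bar{\vec{x}},\bar{\vec{x}}))$ and symmetry of $\ma{K}^{-1}$ to combine the two cross terms. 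This is a scalar quadratic in $\alpha$ with strictly positive leading coefficient under the assumption $\tr(\Var(\ma{K}^{-1}\bar{\vec{x}}))>0$, so it has a unique minimiser obtained by setting the derivative to zero, namely $\alpha^\star$ as in \eqref{eq:alphastar_covovervar}.

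The only genuinely delicate point is the bookkeeping with traces of covariance matrices: one must be careful that $\tr(\Cov(\ma{A}\vec{u},\vec{u})) = \tr(\ma{A}\,\Var(\vec{u}))$ and hence that the two cross terms $-\alpha\,\E[(\bar{\vec{x}}-\hat{\vec{x}})^\top \ma{K}^{-1}(\bar{\vec{x}}-\hat{\vec{x}})]$ and its transpose coincide — this uses $\ma{K}^{-1} = (\ma{K}^{-1})^\top$, which holds since $\ma{K} = q(q\ma{I}+\ma{L})^{-1}$ is symmetric. Everything else is routine: linearity of expectation, the vanishing-gradient identity at $\hat{\vec{x}}$, and minimisation of a one-dimensional quadratic. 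I would also remark that $\alpha^\star$ gives $\mathrm{MSE}(\alpha^\star) = \tr(\Var(\bar{\vec{x}})) - \tr(\Cov(\ma{K}^{-1}\bar{\vec{x}},\bar{\vec{x}}))^2/\tr(\Var(\ma{K}^{-1}\bar{\vec{x}})) \le \tr(\Var(\bar{\vec{x}}))$, which is exactly the promised variance reduction over the original estimator $\bar{\vec{x}}$ (recovered at $\alpha = 0$).
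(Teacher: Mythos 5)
Your proof is correct and follows essentially the same route as the paper's: unbiasedness by linearity of expectation together with $\ma{K}^{-1}\hat{\vec{x}}=\vec{y}$, then writing the MSE as a quadratic in $\alpha$ with leading coefficient $\tr(\Var(\ma{K}^{-1}\bar{\vec{x}}))>0$ and minimising it. You merely spell out the trace/covariance bookkeeping that the paper states without derivation (and note, correctly, the resulting variance reduction at $\alpha^\star$); the appeal to symmetry of $\ma{K}^{-1}$ for combining the cross terms is harmless but not even needed, since a scalar equals its own transpose.
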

	\begin{proof}
		Since $\bar{\vec{x}}$ is an unbiased estimator, the expectation of $\bar{\vec{z}}$ immediately reads
		$\mathbb{E}[\bar{\vec{z}}]= \mathbb{E}[\bar{\vec{x}}] - \alpha(\ma{K}^{-1}\mathbb{E}[\bar{\vec{x}}] - \vec{y}) = \hat{\vec{x}}.$
		Now, let us focus on the variance of this new estimator:
		\begin{equation}
		\begin{split}
				\tr\left(\Var(\bar{\vec{z}})\right) =
				\tr\left(\Var(\bar{\vec{x}})\right) &+ \alpha^2\tr\left(\Var(\ma{K}^{-1}\bar{\vec{x}})\right) \\
				&-2\alpha\tr\left(\Cov(\ma{K}^{-1}\bar{\vec{x}},\bar{\vec{x}})\right).
			\end{split}
		\label{eq:varianceparabola}
		\end{equation}
		 Eq.~\eqref{eq:varianceparabola} is a quadratic function in $\alpha$ and is minimized at $\alpha^\star$ given in Eq.~\eqref{eq:alphastar_covovervar}:
%		\begin{align}
%		\alpha^\star =		\frac{\tr\left(\Cov(\ma{K}^{-1}\bar{\vec{x}},\bar{\vec{x}})\right)}{\tr\left(\Var(\ma{K}^{-1}\bar{\vec{x}})\right)}.
%		\end{align}
	\end{proof}

\noindent \textbf{Connection with control variates.}
The proposed method is a particular instance of the control variate method.
This technique leverages an additional random variable, called control variate, with a known expectation to improve the Monte Carlo performance.
In our case, the control variate is $\ma{K}^{-1}\bar{\vec{x}}$ whose expectation is the input signal $\vec{y}$.
\smallskip

\noindent \textbf{Implementation.}
$\bar{\vec{z}}$  requires the computation of the control variate per sample in addition to $\bar{\vec{x}}$.
The nontrivial part of this computation includes the matrix-vector product $\ma{L}\bar{\vec{x}}$.
However, we can avoid to repeat this product per sample.
Instead, we can calculate the sample mean $\frac{1}{N}\sum_{i=1}^N\bar{\vec{x}}^{(i)}$, then do the product once to directly compute the sample mean for $\bar{\vec{z}}$.
%In addition, for each node $i$, we have $\ma{L}\bar{\vec{x}}(i)=\sum_{j\in\mathcal{N}(i)}w(i,j)(\bar{x}_i - \bar{x}_j)$ and $\bar{\vec{x}}$ is piecewise constant over the partitions.  Therefore, $i$-th entry of this product is zero if $i$ and its all neighbors are in the same partition. This means that we can calculate the nonzero elements of this product by tracking the edges between partitions which is much less than $m$ for small $q$ values.

\subsection{How to select $\alpha$?}
Unfortunately, calculating the optimal value $\alpha^\star$ requires information that is not readily available.  However, as the MSE (Eq.~\eqref{eq:varianceparabola}) is a quadratic function in $\alpha$ minimized in $\alpha^\star$, we know that any choice of $\alpha\in(0,2\alpha^\star)$ will necessarily decrease the variance (see an illustration of this in Fig.~\ref{fig:reg_vs_irreg_alpha}). We now discuss several options to find such appropriate --yet not optimal-- $\alpha$.
%As $\alpha$ goes to zero, $\bar{\vec{z}}$ tends to $\bar{\vec{x}}$ and the performance gain thus tends to $0$.  On the other hand, for values of $\alpha$ greater than $2\alpha^\star$, the performance of $\bar{\vec{z}}$ will in fact be worse than that of $\bar{\vec{x}}$. In this section, we inspect the values of $\alpha$ that avoid both extreme cases.
Our examination can be separated into two parts, each proposing a choice of $\alpha$ with different motivations.
First, we look for a constant for $\alpha$ that is cheap to compute and ensures variance reduction.
Second, we consider how to approximate $\alpha^\star$ within the Monte Carlo simulations.
\medskip

\noindent \textbf{A first option}. To ensure variance reduction, $\bar{\vec{z}}$ must verify:
\[
\mathbb{E}[|| \bar{\vec{z}}- \hat{\vec{x}} ||_2] \leq \mathbb{E}[|| \bar{\vec{x}} - \hat{\vec{x}}||_2].
\]
Notice that this inequality is satisfied if the following is true for all $\bar{\vec{x}},\bar{\vec{z}}\in\mathbb{R}^n$:
\begin{equation}
|| \bar{\vec{z}}- \hat{\vec{x}} ||_2 \leq || \bar{\vec{x}} - \hat{\vec{x}}||_2.
\label{eq:zbarineq}
\end{equation}
We use this inequality to find a safe range for $\alpha$ and propose to choose the largest $\alpha$ within this range. This particular choice costs $\mathcal{O}(1)$ to calculate for a given graph.
\begin{proposition}
\label{prop:saferange}
Let $d_{max}$ be the maximum degree of the graph. Then, setting $\alpha = \frac{2q}{q+2d_{max}}$ ensures that:
\[
|| \bar{\vec{z}}- \hat{\vec{x}} ||_2 \leq || \bar{\vec{x}} - \hat{\vec{x}}||_2.
\]
\begin{proof}
The following must hold to satisfy Eq.~\eqref{eq:zbarineq}
\begin{equation}
	\begin{split}
	|| \bar{\vec{x}} - \hat{\vec{x}} -\alpha(\ma{K}^{-1}\bar{\vec{x}} - \vec{y}) ||_2 &\leq || \bar{\vec{x}} - \hat{\vec{x}}||_2 \\
	||  (\ma{I} - \alpha\ma{K}^{-1})(\bar{\vec{x}} - \hat{\vec{x}}) ||_2 &\leq || \bar{\vec{x}} - \hat{\vec{x}}||_2. \\
	\end{split}
\end{equation}
This inequality holds for all $\bar{\vec{x}},\hat{\vec{x}}$ if and only if $(\ma{I} - \alpha\ma{K}^{-1})$ is a contraction mapping which means:
\[	|\mu_i| \leq 1, \quad \forall i\in \{1,\hdots,n\},\]
where $\mu_i$'s are the eigenvalues of the matrix $(\ma{I} - \alpha\ma{K}^{-1})$. Rewriting this constraint in terms of the eigenvalues of the graph Laplacian, we have: $\forall i,~ |1-\alpha\frac{\lambda_i}{q}-\alpha| \leq 1$, \ie: $\forall i\quad	0\leq\alpha \leq \frac{2q}{ \lambda_{i} +q}$.
The tightest upper bound is given by $\lambda_i=\lambda_n$ and recalling $\lambda_n\leq2d_{max}$~\cite{van2010graph} finishes the proof.
\end{proof}
\end{proposition}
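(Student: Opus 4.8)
The plan is to reduce the claimed norm inequality to a spectral condition on a single symmetric matrix. First I would exploit that $\hat{\vec{x}}$ is a fixed point of the gradient iteration, \ie\ $\ma{K}^{-1}\hat{\vec{x}}=\vec{y}$; subtracting $\hat{\vec{x}}$ from both sides of Eq.~\eqref{eq:defzbar} then gives
\[
\bar{\vec{z}}-\hat{\vec{x}} = (\bar{\vec{x}}-\hat{\vec{x}}) - \alpha\ma{K}^{-1}(\bar{\vec{x}}-\hat{\vec{x}}) = (\ma{I}-\alpha\ma{K}^{-1})(\bar{\vec{x}}-\hat{\vec{x}}).
\]
Hence Eq.~\eqref{eq:zbarineq} holds for \emph{every} realisation of $\bar{\vec{x}}$ if and only if the operator norm of $\ma{I}-\alpha\ma{K}^{-1}$ is at most $1$; since this matrix is symmetric, that is equivalent to asking that all its eigenvalues lie in $[-1,1]$.

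Second, I would compute those eigenvalues explicitly. Writing $\ma{K}^{-1}=\ma{I}+\ma{L}/q$ shows that $\ma{I}-\alpha\ma{K}^{-1}$ is diagonalised in the Laplacian eigenbasis, with eigenvalues $\mu_i = 1-\alpha-\alpha\lambda_i/q$. The condition $|\mu_i|\le 1$ for all $i$ splits into the upper bound $\mu_i\le 1$, which is automatic for $\alpha\ge 0$ because $\lambda_i\ge 0$, and the lower bound $\mu_i\ge -1$, which rearranges to $\alpha\le 2q/(q+\lambda_i)$. The binding constraint comes from the largest eigenvalue $\lambda_n$, so any $\alpha\in[0,\,2q/(q+\lambda_n)]$ works.

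Third, I would replace the (possibly unknown) $\lambda_n$ by a bound computable in $\O(1)$ for a given graph: the standard inequality $\lambda_n\le 2d_{max}$ yields $[0,\,2q/(q+2d_{max})]\subseteq[0,\,2q/(q+\lambda_n)]$, so the prescribed $\alpha=2q/(q+2d_{max})$ satisfies the eigenvalue condition and therefore the claimed inequality. I do not expect a genuine obstacle here; the one point worth stating carefully is the quantifier — requiring the non-expansiveness property for \emph{all} vectors, rather than only for the particular law of $\bar{\vec{x}}-\hat{\vec{x}}$, is exactly what makes $\lambda_n$ (and not some distribution-dependent quantity) the relevant spectral parameter, and it is also why the resulting $\alpha$ is conservative relative to the optimal $\alpha^\star$ of Eq.~\eqref{eq:alphastar_covovervar}.
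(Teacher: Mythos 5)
Your proposal is correct and follows essentially the same route as the paper's proof: reduce the inequality to the spectral condition $|\mu_i|\le 1$ on the symmetric matrix $\ma{I}-\alpha\ma{K}^{-1}$, express $\mu_i = 1-\alpha-\alpha\lambda_i/q$ via the Laplacian eigenbasis, identify $\lambda_n$ as the binding constraint, and conclude with $\lambda_n\le 2d_{max}$. You are in fact slightly more explicit than the paper in justifying the key identity $\bar{\vec{z}}-\hat{\vec{x}}=(\ma{I}-\alpha\ma{K}^{-1})(\bar{\vec{x}}-\hat{\vec{x}})$ via $\ma{K}^{-1}\hat{\vec{x}}=\vec{y}$, but this is a presentational difference, not a different argument.
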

\noindent \textbf{A second option}. In this section, we approximate $\alpha^\star$ from the samples.
Let us define $\bar{\vec{y}} \coloneqq \ma{K}^{-1}\vec{\bar{x}}$. Then, Eq.~\eqref{eq:alphastar_covovervar} may be re-written as 
$
\alpha^\star= \frac{
	\tr(\Cov(\vec{\bar{x}},\vec{\bar{y}}))}{\tr(\Var(\vec{\bar{y}}))}.$ 
Using $N$ samples $\bar{\vec{x}}^{(1)},\dots,\bar{\vec{x}}^{(N)}$ and the  corresponding $\bar{\vec{y}}^{(1)},\dots,\bar{\vec{y}}^{(N)}$, we can empirically estimate $\alpha^*$ with the sample covariance matrices:
\begin{equation}
\hat{\alpha}= \frac{
	\tr(\widehat{\Cov}(\vec{\bar{x}},\vec{\bar{y}}))}{\tr(\widehat{\Var}(\vec{\bar{y}}))}. 	\label{eq:alphahat}
\end{equation}
%where $\vec{\mu_{\bar{\vec{x}}}}$ and $\vec{\mu_{\bar{\vec{y}}}}$ are the sample means.
 The analysis in~\cite[Ch~8.9]{mcbook} shows that along with
$\hat{\alpha}$, $\bar{\vec{z}}$ may yield a biased estimator. However, there are heuristics indicating that the bias
diminishes with rate $\mathcal{O}(N^{-1})$ which is often negligible for
large $N$ w.r.t. the estimation error.\medskip
\begin{figure}
	\includegraphics[width=9cm,height=4cm]{./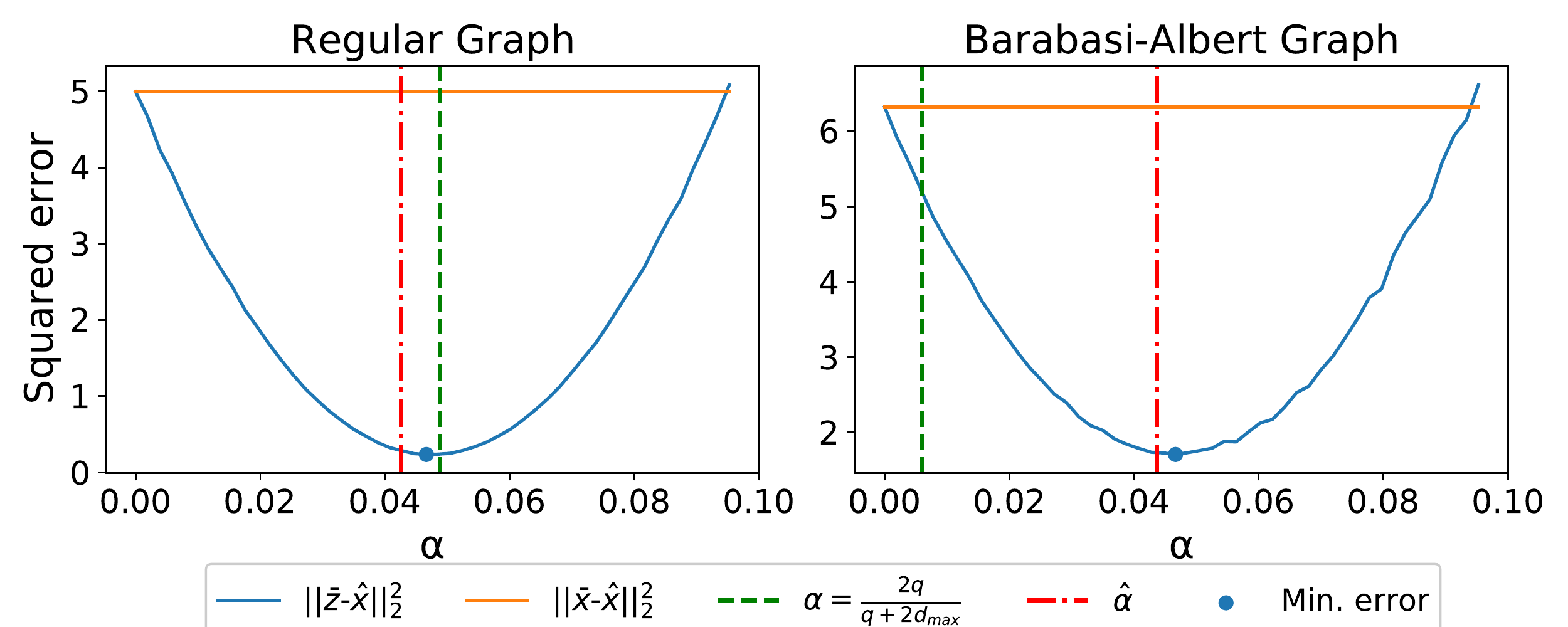}
	\caption{Empirical squared error of $\bar{\vec{x}}$ (orange horizontal line) and $\bar{\vec{z}}$ (blue parabola) w.r.t $\alpha$ on two graphs generated by random models. On the left is a random regular graph with $n=1000$ and $m=10000$. On the right is a Barabasi-Albert model with parameter $k=10$ resulting in $n=1000$ and $m=9900$.
	The green (resp. red) vertical dashed line shows $\alpha=\frac{2q}{q+2d_{max}}$ (resp. the estimated $\hat{\alpha}$ from the samples). The blue dot represents the best possible variance reduction obtained for $\alpha=\alpha^\star$. The number of Monte Carlo samples is set to $N=10$ and the error results are averaged over 200 realizations. 
	The signal is a random vector generated from $\mathcal{N}(\vec{0},\ma{I})$. }
	\label{fig:reg_vs_irreg_alpha}
\end{figure}

\noindent\textbf{An empirical comparison} of these two options for $\alpha$ is given in Fig.~\ref{fig:reg_vs_irreg_alpha}.
In a graph with a regular degree distribution, fixing $\alpha$ to $\frac{2q}{q+2d_{max}}$ gives a performance close to optimum. 
However, in the case of a graph with a broad degree distribution, 
$\frac{2q}{q+2d_{max}}$ fails to provide a good approximation for $\alpha^\star$, and one needs to resort to $\hat{\alpha}$ in order to leverage the full variance reduction potential offered by this gradient descent step.

\section{Experiments}
\label{sec:exp}
We illustrate the improved estimator $\bar{\vec{z}}$ over real data sets and compare it with $\bar{\vec{x}}$.
\smallskip

\noindent \textbf{Tikhonov denoising.} The first data set consists of the trip records of taxis in New York City~\footnote{The dataset is available at \url{https://www1.nyc.gov/site/tlc/about/tlc-trip-record-data.page}} over 260 zones. From their geographical positions, we build a $k=5$-nearest neighbour graph. The signal, in our case, is the median of the total amount charged to passengers at 260 drop-off zones. In the experiments, we add artificial Gaussian noise to the signal and reconstruct it via Tikhonov denoising and its estimates $\bar{\vec{x}}$ and $\bar{\vec{z}}$. Fig.~\ref{fig:denoising} summarizes the results.
\begin{figure}[t!]
	\centering
	\subfloat[Original Signal]{\includegraphics[width=4cm]{./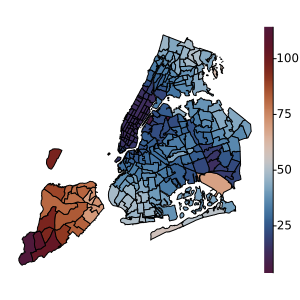}\label{fig:denoising:original}}
	\subfloat[Noisy Measurements $\vec{y}$ ]{\includegraphics[width=4cm]{./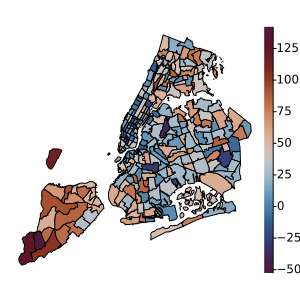}} \\
	\subfloat[Exact solution $\hat{\vec{x}}$ ]{\includegraphics[width=4cm]{./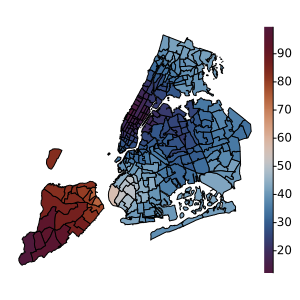}}
	\subfloat[PSNR vs q, $N=2$]{\includegraphics[width=4cm,height=4.25cm]{./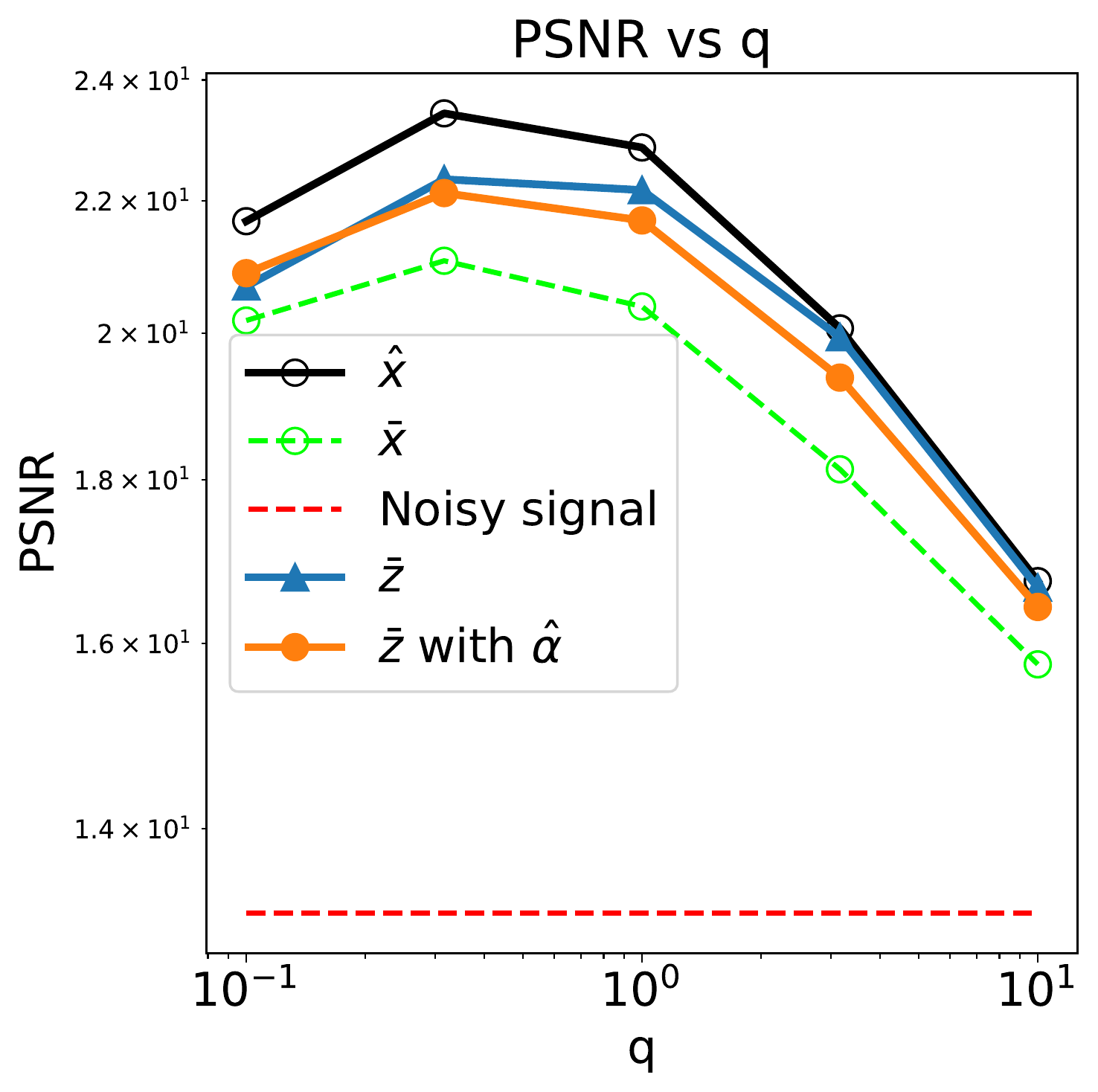}\label{fig:denoising:psnr}}  \\
	\caption{Tikhonov denoising on New York City's taxi network. A $k=5$-nn graph with $|\mathcal{V}|=260,|\mathcal{E}|=777$ is built based on the proximity of the zones. The original signal in (a) is the median of the total amount charged to passengers in drop-off locations. The noisy signal (b) is generated by adding noise drawn from a Gaussian distribution $\mathcal{N}(\vec{x},25)$. (c) is the exact solution  $\ma{K}\vec{y}$ with $q=0.32$. In (d), peak signal to noise ratio is plotted for $\vec{y},\bar{\vec{x}},\bar{\vec{z}}$ with $\alpha=\frac{2q}{q + 2d_{max}}$ and $\bar{\vec{z}}$ with $\hat{\alpha}$ as in Eq.~\eqref{eq:alphahat}.
	 }
	\label{fig:denoising}
\end{figure}
In Fig.~\ref{fig:denoising:psnr}, we can clearly see the improvement via $\bar{\vec{z}}$ w.r.t. $\bar{\vec{x}}$ in denoising performance. We observe that $\bar{\vec{z}}$ with a constant $\alpha$ slightly outperforms the case with $\hat{\alpha}$. This is probably due to the poor estimation of $\hat{\alpha}$ with $N=2$.
\medskip

\noindent \textbf{Node classification.} In the second illustration, we use $\bar{\vec{z}}$ for solving the semi-supervised node classification problem.
The purpose is to classify each vertex by leveraging the underlying graph while the class information is available over only a few vertices.
We define a labeling $\ma{Y} = [\vec{y}_1,\hdots,\vec{y}_k]\in\mathbb{R}^{n\times k}$ for $k$ classes where $\vec{y}_l(i)$ is $1$ whenever node $i$ is known to be in class $j$, otherwise it is $0$.
Given the set of labeled vertices  $\ell$, typically $|\ell|\ll|\mathcal{V}|$, the goal is to find a classification function $\ma{F}=[\vec{f}_1,\hdots,\vec{f}_k]\in\mathbb{R}^{n\times k}$.
One approach is generalized semi supervised learning framework by Avrachevkov~\etal~\cite{avrachenkov2012generalized}   which calculates the following solution:
\begin{equation}
	\ma{F} = \ma{D}^{1-\sigma}\ma{K}\ma{D}^{\sigma-1}\ma{Y}  \quad \text{ with } \ma{K} = (\ma{D} + \frac{2}{\mu}\ma{L})^{-1}\ma{D}
	\label{eq:gSSLsolution}
\end{equation}
where $\mu\in\mathbb{R}^{+}$ and $\sigma\in[0,1]$ are the hyperparameters of the algorithm. The cumbersome operation in this solution occurs in the calculations involving $\ma{K}$ due to the matrix inversion.
In~\cite{pilavci2020smoothing}, we have already showed that $\bar{\vec{x}}$ can approximate these calculations with a certain setting%~\footnote{In Wilson's algorithm, the parameter $q$ is vertex-wise set to $\frac{\mu }{2}d_i$ for each node $i$.}.
:
\[
	\bar{\vec{z}} = \ma{D}^{1-\sigma}(\bar{\vec{x}} - \alpha(\ma{K}^{-1}\bar{\vec{x}} - \vec{y}_l))\ma{D}^{\sigma-1}.
\]
Adapting the calculations in Prop. \ref{prop:saferange}, one obtains the safe option as $\alpha=\frac{2\mu}{\mu+4}$.
In the experiments, we examine the classification accuracy of the exact solution and RSF estimates on the benchmark data sets Cora and Citeseer citation network\footnote{These datasets are available at https://linqs.soe.ucsc.edu/data}. Fig.~\ref{fig:ssl} presents the classification accuracy given by $\hat{\vec{x}},\bar{\vec{x}}$ and $\hat{\vec{z}}$.
\begin{figure}
	{\includegraphics[width=8cm,height=4cm]{./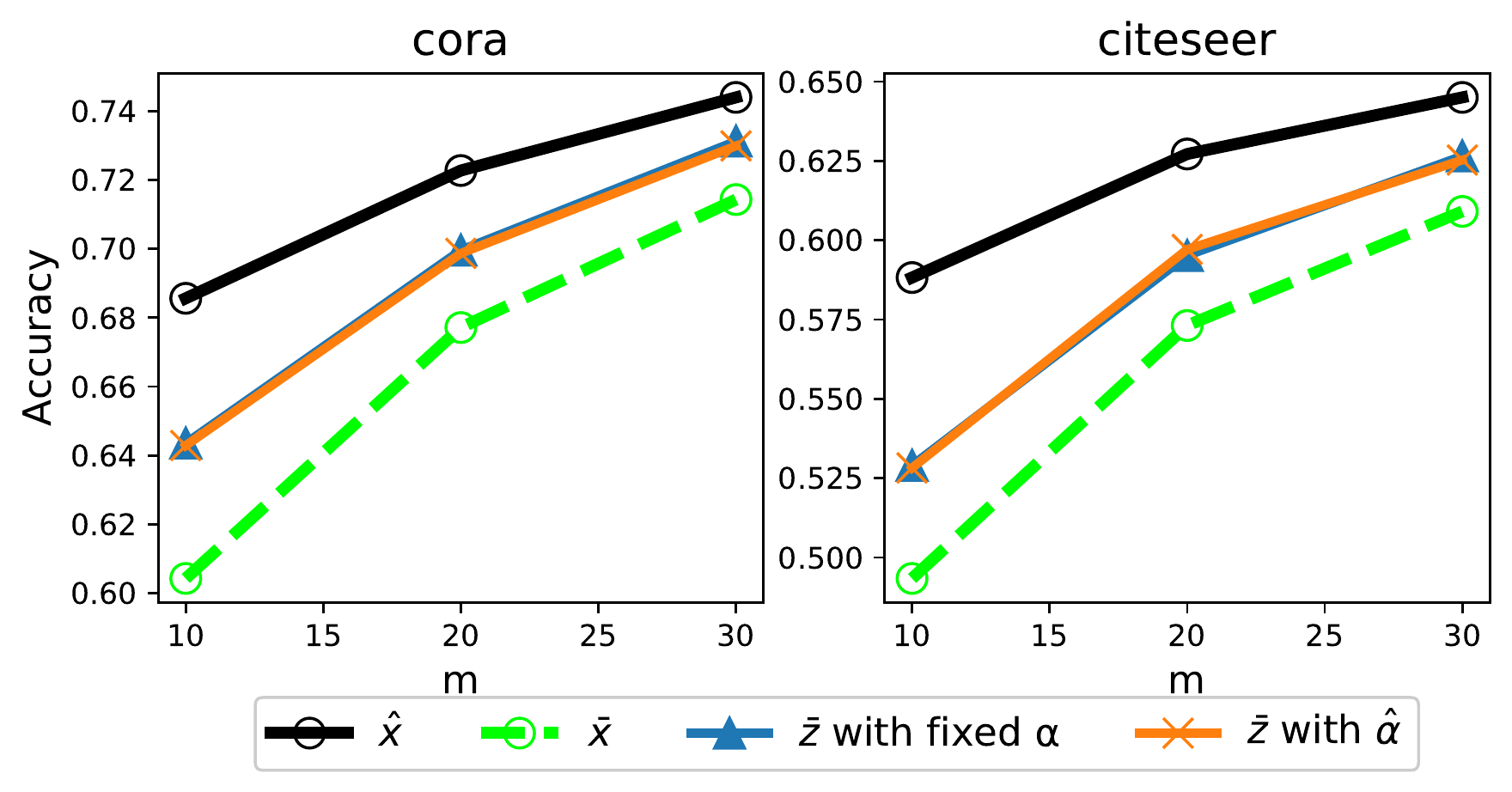}}%
	\caption{Classification performance on the Cora and Citeseer datasets. $m$ denotes the number of labeled vertices per class. In the experiments, we use $\mu=1.0$, $\sigma=0$ and $\alpha =\frac{2\mu}{\mu+4}$. The number of Monte Carlo samples is 50 and the experiments are repeated 100 times for each $m$.  }
\label{fig:ssl}
\end{figure}
We observe that $\bar{\vec{z}}$ with the constant $\alpha$ performs quite similar to $\bar{\vec{z}}$ with $\hat{\alpha}$ while both outperforming $\bar{\vec{x}}$: for this particular example $\frac{2\mu}{\mu+4}$ gives a very good approximation for $\alpha^\star$.

\section{Conclusion}
\label{sec:conc}
The main idea in this work is that, given an unbiased estimator of the solution to a large-scale least-squares problem (such as in the recent lines of works based on DPPs~\cite{derezinski2021determinantal, fanuel2021diversity}), then one (or in fact several) step of classical deterministic gradient descent leaves the estimator unbiased while potentially significantly reducing its variance, for a very small extra cost.

We illustrate this idea on graph Tikhonov regularization via random spanning forests, where we show that the variance of the estimator may easily be divided by several factors if one chooses correctly the gradient descent step $\alpha$. The simplest (and free to compute) choice for $\alpha$ is by far $2q/(q+2d_{max})$. Even though this choice is shown to necessarily decrease the variance, its performance is hindered when the degree distribution of the graph is broad and $d_{max}$ becomes a crude upper bound of the distribution's mode.  

A natural workaround that we will explore in future work is to precondition the gradient step by $\ma{D}^{-1}$ yielding (still unbiased) estimators of the form $\bar{\vec{z}} \coloneqq \bar{\vec{x}} - \alpha \ma{D}^{-1}( \ma{K}^{-1}\bar{\vec{x}} - \vec{y})$.

\vfill\pagebreak

% References should be produced using the bibtex program from suitable
% BiBTeX files (here: strings, refs, manuals). The IEEEbib.bst bibliography
% style file from IEEE produces unsorted bibliography list.
% -------------------------------------------------------------------------
\bibliographystyle{IEEEbib}
\bibliography{strings,refs}

\end{document}